\newtheorem{mytheom}{Theorem}
\newtheorem{lemma}{Lemma}
\newtheorem{proposition}{Proposition}
\newtheorem{defn}{Definition}
\newtheorem{assumption}{Assumption}
\title{Enhanced Automatic Generation Control (E-AGC) for  Electric Power Systems with Large Intermittent Renewable Energy Sources
\thanks{The authors greatly appreciate partial funding by NIST.  Also, discussions with  Ms. Rupamathi Jaddivada and the use of  Scalable Electric Power System Simulator (SEPSS) at MIT  \cite{SEPSS} are greatly acknowledged.  
}
	\thanks{Accepted to the IEEE PES GM 2019. \textcopyright  ~2019 IEEE. }
}
	\author{	\IEEEauthorblockN{ Xia Miao, Marija Ili\'{c}}
	\IEEEauthorblockA{\textit{LIDS, Massachusetts Institute of Technology}\\
		 \{xmiao,ilic\}@mit.edu}
		\and
		\IEEEauthorblockN{ Qixing Liu}
			\IEEEauthorblockA{\textit{China Southern Power Grid}, Guangzhou, China \\
				liuqx@csg.cn}
   }
\begin{document}
	\maketitle

\begin{abstract}
This paper is motivated by the need to enhance today's Automatic Generation Control (AGC) for ensuring high quality frequency response  in the  changing  electric power systems.  Renewable energy sources, if not controlled carefully, create persistent fast and often large oscillations in  their electric power outputs.  A sufficiently detailed dynamical model  of the interconnected system which captures  effects of  fast nonlinear disturbances created by the renewable energy resources  is  derived for the first time. Consequently,  the real power flow interarea oscillations, and the resulting frequency deviations are modeled. The modeling is multi-layered, and the dynamics of each layer (component level (generator); control area (control balancing authority), and the interconnected system) is  expressed in terms of  internal states and the interaction variables (IntV) between the layers and within the layers. E-AGC is then derived using this model to show how these interarea oscillations can be canceled.  Simulation studies are carried out on a 5-bus system.
\end{abstract}

\section{Introduction and motivation}
\label{Sec:Intro}
A high quality of electricity service requires near-ideal nominal frequency, which is achieved by maintaining instantaneous supply-demand power balance. System operation under off-nominal frequency can deteriorate electric equipment, degrade the performance of electric load and even lead to wide-spread system failures and blackouts \cite{blackout}. Recently, the industrial concerns regarding  frequency quality have grown as the increasing Renewable Energy Sources (RES) presence. The RES which are inherently  intermittent can lead to continuous supply-demand mismatch and drive the system frequency varying around the desired nominal value with unacceptable quality of response (QoR). 

To secure power system operations, the unacceptable frequency excursion must be regulated close to zero in real time by means of automated feedback control. The AGC is widely implemented for this purpose \cite{AGC1,AGC2}. However, AGC is mainly designed based on steady state concepts. When AGC is applied to a system with RES, the fast persistent disturbances caused by the RES can drive the system dynamically varying around the equilibrium such that the assumptions of the AGC could become invalid and the AGC might not be as effective as expected. Therefore, the frequency regulation needs to be enhanced and the new approach should extend the modeling and control of AGC from steady state to dynamics.

In the past decades, to improve the performance of AGC, a concept of Area Control Error (ACE) Diversity Interchange (ADI) was proposed in the industry practice \cite{ADI}. However, it is still based on steady state concepts and it is not economic efficient as no coordination exists between different area. A LQR-based full state feedback control was proposed in \cite{LQR} for load frequency control. Thereafter, many follow-up works have been done. One limitation of LQR-based approaches is that they are centralized and requiring overly complicated sensing and communication. To improve QoR and system level coordination without complicated sensing and communication infrastructure, the authors introduce Enhanced AGC (E-AGC) concept in \cite{ Qixing2012, QixingThesis}. However, it should be pointed out that almost all of these design utilize the linearized model which is not valid for large disturbances. Thus, a new approach is needed to consider the tradeoff between the control performance and the complexity.

There are two main contributions of this paper. First, a sufficiently detailed nonlinear dynamical model which captures effects of fast and large nonlinear disturbances is derived for the first time in Section \ref{Sec:model}. Notably, most of existing frequency regulation methods ignore the network dynamics. In Section \ref{Sec:simulation}, we show that the fast network dynamics should not be neglected in today's electric power systems due to high RESs penetration. Second, we adopt the merits of \cite{Qixing2012,QixingThesis} and then propose a general multi-layered control using the proposed nonlinear interconnected system model in Section \ref{Sec:ControlDesign}. We have also provably shown that the proposed approach is capable of canceling fast network inter-area dynamical oscillations and achieving system-level coordination. Unlike \cite{Qixing2012},  the proposed method no longer utilizes the small signal model and the routinely made assumptions such as the network dynamics are non-oscillatory. Thus, the contributions of this paper differ substantially from \cite{Qixing2012,QixingThesis}.  

The paper is organized as follows. Section \ref{Sec: problemformulation} and \ref{Sec:model} provide problem formulation and dynamic model. The proposed multi-layered control is explained in Section \ref{Sec:ControlDesign}. Case studies are given in Section \ref{Sec:simulation}. Section \ref{Sec:Conclusion} concludes the paper.  
\section{Dynamic Modeling and Problem Formulation }
\label{Sec: problemformulation}
Power electronic devices have been widely installed in the field for voltage regulation. Thus, it is reasonable to make: 
\begin{assumption}
	The voltage magnitude of each bus is bounded. 
\end{assumption}

It should be noted that Assumption 1 is a mild assumption as we do not fix the voltage. It can vary within a range. In fact, most of existing frequency control approaches neglect the voltage dynamics, i.e., voltage is constant. Notice that the network dynamics is considered. As  the interconnected system is treated as a dynamical system, each bus is equivalent. In other words, traditional bus classification (PV, PQ bus) is no longer suitable.  

\subsection{Dynamical model of system components}
\subsubsection{Generation component}
Generators have similar role contributing to frequency dynamics, regardless of their types. Thus, we choose a nonlinear non-reheat generator with a primary governor controller embedded as \cite{QixingThesis}:
\begin{equation}
\begin{split}
\label{Eqn:SM}
\dot \delta_G &= \omega_0(\omega_G-\omega^{ref})\\\
M\dot \omega_G &= P_m + P_m^{ref} - D(\omega_G - \omega_0) - P_e \\
T_u \dot{P}_m &= -P_m + K_t a \\
T_g \dot{a} &= -ra - (\omega_G -\omega^{ref}) + u_{AGC}	
\end{split}
\end{equation}  
State variables $x_G = [\delta_G, \omega_G, P_m, a]^T$ represent the rotor angle, rotational speed, mechanical power injection, and steam valve position, respectively. $\omega_0$ is the rated angular velocity. $M$, $D$, $K_t$, $T_u$, $T_g$ and $r$ are machine parameters. 


It should be pointed out that $P_e$ is the source of the nonlinearity for \eqref{Eqn:SM}. $P_e = f_1(\delta_G, x_{TL}, x_{L})$ represents the sum of the real power transfered on its connecting transmission lines, which is a nonlinear function of  $\delta_G$, line states $x_{TL}$ and load states $x_{L}$. 

\subsubsection{Load component}
The load is modeled in the network reference frame as:  
\begin{equation}
\begin{split}
\label{Eqn:load}
L_{L} \dot i_{Ld} = -R_{L} i_{Ld} + \omega L_{L}i_{Lq} + V_{Ld}  \\
L_{L} \dot i_{Lq} = -R_{L} i_{Lq} - \omega L_{L}i_{Ld} + V_{Lq}   
\end{split}
\end{equation}
State variables $x_L = [i_{Ld}, i_{Lq}]^T$ represent the d-axis and q-axis load current, respectively. $L_{L}$ and $R_L$ stand for the load inductance and resistance. $\omega$ denotes the grid frequency. $V_{Ld}$ and $V_{Lq}$ are the d-axis and q-axis of the terminal voltage. $V_{Ld}:=V\cos\theta_V$ and $V_{Lq} := V\sin\theta_V$ are nonlinear function of the terminal voltage angle $\theta_V$. Note that $\theta_V =\delta_G$ when a generator is connected at the same bus. The load \eqref{Eqn:load} satisfies: 

\begin{proposition}
	\label{prop:1}
	Given Assumption 1, state variables $x_{L}$ of load component \eqref{Eqn:load} are bounded.  
\end{proposition}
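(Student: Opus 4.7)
The plan is to use a simple quadratic Lyapunov / stored-energy argument, exploiting the fact that the only coupling between the $d$ and $q$ equations is through the skew-symmetric rotation term $\omega L_L$, which cannot do net work on the system.

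First, I would introduce the stored magnetic energy
\begin{equation*}
W(x_L) \;=\; \tfrac{1}{2} L_L \bigl( i_{Ld}^{2} + i_{Lq}^{2} \bigr) \;=\; \tfrac{1}{2} L_L \|x_L\|^{2}.
\end{equation*}
Differentiating along the trajectories of \eqref{Eqn:load} and collecting terms, the cross terms $\omega L_L i_{Ld} i_{Lq}$ and $-\omega L_L i_{Lq} i_{Ld}$ cancel, leaving
\begin{equation*}
\dot W \;=\; -R_L \|x_L\|^{2} \;+\; V_{Ld}\, i_{Ld} + V_{Lq}\, i_{Lq}.
\end{equation*}

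Next I would invoke Assumption~1: boundedness of the bus voltage magnitude $V$ implies, via $V_{Ld}=V\cos\theta_V$ and $V_{Lq}=V\sin\theta_V$, that there exists a constant $\bar V<\infty$ with $V_{Ld}^{2}+V_{Lq}^{2}\le \bar V^{2}$. Applying Cauchy--Schwarz to the inner product gives
\begin{equation*}
\dot W \;\le\; -R_L \|x_L\|^{2} + \bar V\, \|x_L\|.
\end{equation*}
Hence whenever $\|x_L\|>\bar V/R_L$, we have $\dot W<0$. Consequently $W$ (and therefore $\|x_L\|$) cannot grow without bound: any trajectory enters and remains within the ball $\{x_L:\|x_L\|\le \max(\|x_L(0)\|,\bar V/R_L)\}$, which is exactly the claim.

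There is no real obstacle here; the only points requiring care are (i) verifying algebraically that the $\omega L_L$ terms cancel regardless of how $\omega$ evolves in time (so no assumption on frequency boundedness is needed for this proposition), and (ii) noting that $R_L>0$ and $L_L>0$ so the dissipation dominates at large $\|x_L\|$. If desired, the ISS-type bound can be sharpened via Young's inequality into an exponential estimate $W(t)\le e^{-\alpha t}W(0)+C\bar V^{2}$ with $\alpha=R_L/L_L$, but the simple comparison above already establishes the stated boundedness.
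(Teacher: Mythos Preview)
Your argument is correct and complete. It differs from the paper's proof, which is more abstract: the paper observes that the unforced system has a Hurwitz matrix $A_L$, invokes Corollary~5.2 in Khalil to obtain $\mathcal{L}_p$ (input--output) stability, and then concludes boundedness of $x_L$ from boundedness of $(V_{Ld},V_{Lq})$. Your route is a direct stored-energy computation: the key observation that the $\omega L_L$ cross terms are skew-symmetric and drop out of $\dot W$ makes the argument self-contained and, notably, valid for \emph{arbitrary} time-varying $\omega(t)$ without any appeal to a frozen-$\omega$ Hurwitz check. The paper's approach is shorter to state but leans on an external reference and implicitly treats $A_L$ as constant; your approach is slightly longer but more elementary, yields the explicit ultimate bound $\bar V/R_L$, and makes transparent exactly why the frequency coupling does no work---which is also the mechanism behind Proposition~\ref{prop:2}.
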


\begin{proof}
	It can be seen that \eqref{Eqn:load} is asymptotically stable if $V_{Ld} = 0$ and $V_{Lq} = 0$. In addition, system matrix $A_L$ is Hurwitz. Thus, using Corollary 5.2 in \cite{Khaili}, we know that \eqref{Eqn:load} is $\mathcal{L}_p$ stable. Notice that voltage magnitude  $V = \sqrt{V_{Ld}^2+V_{Lq}^2}$. Provided Assumption 1, nonlinear inputs $V_{Ld}$ and $V_{Lq}$ are bounded, which yields that $x_{L}$ are bounded. 
\end{proof}

\subsubsection{Network component (transmission line)}
 Transmission line component is modeled in the network reference frame as: 
\begin{equation}
\label{Eqn:TL}
\begin{split}
	L_{TL} \dot i_{TLd} = -R_{TL} i_{TLd} + \omega L_{TL}i_{TLq} + V_{d,L} - V_{d,R} \\
		L_{TL} \dot i_{TLq} = -R_{TL} i_{TLq} - \omega L_{TL}i_{TLd} + V_{q,L} - V_{q,R}    
\end{split}
	\end{equation}
State variables $x_{TL} = [i_{TL,d}, i_{TL,q}]^T$ represent the d and q-axis line current. $R_{TL}$ and $L_{TL}$ are resistance and inductance of the line. $(V_{d,L}, V_{q,L})$ and $ (V_{d,R}, V_{q,R})$ denote the left and right port voltage, respectively. The nonlinearity of \eqref{Eqn:TL} is introduced by its port voltages. 

\begin{proposition}
	\label{prop:2}
	Given Assumption 1, state variables $x_{TL}$ of network dynamics \eqref{Eqn:TL} are bounded.  
\end{proposition}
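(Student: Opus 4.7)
The plan is to mirror the proof of Proposition~\ref{prop:1} essentially verbatim, since \eqref{Eqn:TL} has the same block structure as \eqref{Eqn:load}, differing only in parameter labels and in what plays the role of the forcing input. Treating \eqref{Eqn:TL} as a linear system driven by the nonlinear inputs $u_1 := V_{d,L} - V_{d,R}$ and $u_2 := V_{q,L} - V_{q,R}$, the three steps are: verify that the unforced part is asymptotically stable, invoke the same $\mathcal{L}_p$-stability corollary as before, and show that the inputs are bounded under Assumption~1.

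For the stability step, setting $u_1 = u_2 = 0$ leaves a linear system whose system matrix has diagonal entries $-R_{TL}/L_{TL}$ and skew off-diagonal entries $\pm\omega$, so its eigenvalues are $-R_{TL}/L_{TL}\pm j\omega$ and lie in the open left half plane whenever $R_{TL}>0$. Hence the matrix is Hurwitz, Corollary~5.2 in \cite{Khaili} applies, and the map from $(u_1,u_2)$ to $x_{TL}$ is $\mathcal{L}_p$ stable. For boundedness of the inputs, Assumption~1 gives bounded voltage magnitudes $V_L$ and $V_R$ at the two ports, so each of $V_{d,L},V_{q,L},V_{d,R},V_{q,R}$ is bounded by the corresponding bus magnitude; the triangle inequality then bounds $u_1$ and $u_2$. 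Combining $\mathcal{L}_p$ stability with bounded forcing gives the claim.

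The main obstacle I anticipate is that the grid frequency $\omega$ appearing in \eqref{Eqn:TL} is itself a state of the full interconnected system and therefore time-varying, so strictly speaking \eqref{Eqn:TL} is LTV rather than LTI and the cited corollary does not apply verbatim. The obstruction is however cosmetic: the $\pm\omega L_{TL}$ cross coupling is skew-symmetric and so drops out of the derivative of the natural storage function $V(x_{TL}) = \tfrac{1}{2}L_{TL}(i_{TLd}^2+i_{TLq}^2)$, leaving $\dot V = -R_{TL}(i_{TLd}^2+i_{TLq}^2) + i_{TLd}u_1 + i_{TLq}u_2$. This input-to-state estimate reproduces the conclusion of the corollary uniformly in $\omega(t)$ and closes the gap, while also indicating that the argument in Proposition~\ref{prop:1} implicitly relies on the same skew-symmetric cancellation.
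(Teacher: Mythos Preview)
Your proposal is correct and follows exactly the route the paper intends: the paper's own ``proof'' of Proposition~\ref{prop:2} consists solely of the remark that the argument is the same as for Proposition~\ref{prop:1}, with details omitted. Your write-up supplies those details faithfully, and your additional paragraph handling the time-varying $\omega$ via the skew-symmetric storage-function cancellation is a genuine improvement over what the paper (and its Proposition~\ref{prop:1}) actually justifies.
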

	The proof is similar as that of Proposition \ref{prop:1}.  Detail derivation is omitted for brevity.

\subsection{Modeling of disturbances}
Disturbances are characterized as exogenous hard-to-predict inputs to the system. Since disturbances can enter the system through different components, we group them into a vector of external disturbances  $d_{ext}$ as seen by components. 
\subsection{Dynamical model of interconnected systems}
The overall interconnected system dynamics can be obtained by combining components together as:
\begin{equation}
\label{Eqn:system}
\begin{split}
	\dot{x}_G & = A_Gx_G + B_Gu_{AGC} + F_Gf_1(x_G,x_{TL},x_{L}, d_{ext})\\
	\dot{x}_{TL} &= A_{TL}x_{TL} + F_{TL}f_2(x_{TL},x_G,x_{L}, d_{ext}) \\
	\dot{x}_{L} &= A_Lx_L + F_Lf_3(x_{L},x_{TL},x_G, d_{ext})
\end{split}
\end{equation}

Notably, $A_G$ is rank 1 deficiency due to the conservation of power, while $A_{TL}$ and $A_{L}$ are Hurwitz matrices. $F_G$, $F_{TL}$ and $F_L$ are the input matrices corresponding to nonlinear coupling $f_1$, $f_2$ and $f_3$, respectively. Network coupling between different components are implicitly shown in $f_1$, $f_2$ and $f_3$. 

\subsection{Problem formulation}
The problem considered in this paper can be posed as: 
\begin{itemize}
	\item Given:  interconnected system  dynamical model \eqref{Eqn:system}
	\item Design: AGC control input $u_{AGC}$
	\item Objectives: both state variables $[x_G, x_{TL},x_L]^T$ and nonlinear interaction $[f_1, f_2, f_3]$ are stabilized and regulated. 
\end{itemize}

\section{Multi-layered dynamical model of interconnected systems}
\label{Sec:model}
In this section the multi-layered model of the interconnected systems \eqref{Eqn:system} is derived. Notice that variations of $f_2$ and $f_3$ are indeed driven by $f_1$, due to the fact that generators are the only active components that produce power. In addition, we have shown that $x_{TL}$ and $x_{L}$ are bounded by $f_2$ and $f_3$ (see Proposition 1 and 2). Therefore, if  $f_1$ can be controlled, $f_2$ and $f_3$ can be indirectly controlled, which further ensures the system performance.  

To achieve this goal, the definition of the interaction variable (IntV), which was proposed in \cite{IntV2,QixingThesis},  is revisited and a new interpretation is proposed for the nonlinear systems \eqref{Eqn:system}.
\begin{defn} Given a dynamic component (subsystem), its IntV $z$ is an output variable in terms of the local states of the component (subsystem) and it satisfies:
	\begin{equation}
		z \equiv const
	\end{equation}
	when the component (subsystem) is free of any conserved net power imbalance.
\end{defn}
An IntV is generally defined to capture the non-zero conserved net power imbalance of a component (subsystem). In what follows, the multi-layered dynamic model based on IntV is provided. 

We first decompose $u_{AGC}$ of \eqref{Eqn:SM} into component-level, area-level, and system-level control signal as: 
\begin{equation}
\label{Eqn:uAGC}
u_{AGC} = u_{AGC, c} + u_{AGC,r}  + u_{AGC,s}
\end{equation}
These control components will later appear at different layers. 
\subsection{Component-level dynamical model}
 Component-level IntV dynamical model has the form: 
\begin{equation}
\label{Eqn:componentIntV}
\dot{z}_c = P_m^{ref} -P_e + \frac{K_t}{r}u_{AGC,c}  \quad z_c(t_0) =z_{c0}
\end{equation}
 It can been seen that $\dot{z}_c$ captures the conserved net power imbalance of the component. It is worthwhile mentioning that $z_c$ simply depends on its own states, i.e., no assumption about the strength of the external interconnection is needed. This fact has been proved for linearized models \cite{IntV2,QixingThesis}.  
\subsection{Area-level dynamical model}
Similarly, a new IntV $z_r$ is introduced for the control areas. Recall the steady-state concept ACE.  The dynamics of $z_r$ can be therefore considered as a dynamic version of ACE. 

The dynamics of IntV $z_r^C$ is: 
\begin{equation}
\label{Eqn:areaInt}
\dot{z}_r^C = \sum_{i=1}^{N_c^r}\dot z_{c,i} = B_r \mathbf{u}_{AGC,r}  ~~~  z_r(t_0) =z_{r0},~B_r = \mathbf{1}^{N_c^r\times 1}
\end{equation}
$N_c^r$ stands for the number of generators inside the control area. 
\subsection{System-level dynamical model}
We can then apply the same procedure at the interconnected system level. Thus, the dynamic of  system-level IntV $z_s$ is: 
\begin{equation}
\label{Eqn:sysInt}
\dot{z}_s^R= \sum_{i=1}^{N_s^R}\dot z_{r,i}^C  = B_s \mathbf{u}_{AGC,s}~~~  z_s(t_0) =z_{s0},~B_s = \mathbf{1}^{N_s^r\times 1}
\end{equation}
where $N_s^R$ is the number of control areas.

\section{Design of Enhanced AGC(E-AGC) for complex electric power system dynamics}
\label{Sec:ControlDesign}
Primarily, the objective of the E-AGC is to ensure an acceptable QoR of frequency dynamics. This is achieved through controlling the IntVs at different levels. First, $z_c$ is controlled at constant in order to eliminate the real-time net power imbalance. Second, $z_r^C$ and $z_s^R$ need to be regulated to zero in order to maintain the variation of total inadvertent power exchange around zero. Through the coordination of  widely dispersed control resources, the inexpensive ones can be fully utilized so that the system-level control cost can be reduced in comparison to today's AGC approach.
\subsection{Component-level design}
The component level IntV dynamics \eqref{Eqn:componentIntV} is utilized. In order to have constant $z_c$, we design $u_{AGC,c}$ as:
\begin{equation}
\label{Eqn:u_c}
	u_{AGC,c} = \frac{r}{K_t}(P_e - P_m^{ref})
\end{equation} 

Substituting Eqn.\eqref{Eqn:u_c} into Eqn.\eqref{Eqn:SM}, we obtain the closed-loop generator model, which is provably stable under certain conditions. The result is given below.   

\begin{lemma}
	\label{lemma1}
	With control design Eqn.\eqref{Eqn:u_c}, the generator module \eqref{Eqn:SM} is stable in the sense of Lyapunov if the following condition is satisfied: 
	\begin{equation}
		||P_e - P_m^{ref}||_2 \leq \frac{K_t}{r}u_{max}
	\end{equation}
	where $u_{max}$ denotes the saturation limit of the control input. 
\end{lemma}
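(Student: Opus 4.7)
The plan is to combine three ingredients: substitution of the component-level control law into \eqref{Eqn:SM}, the conservation of the IntV $z_c$ already established in \eqref{Eqn:componentIntV}, and a Lyapunov argument that absorbs the remaining bounded driving term. The hypothesis $\|P_e - P_m^{ref}\|_2 \leq \frac{K_t}{r}u_{max}$ is exactly the statement that the commanded control $u_{AGC,c} = \frac{r}{K_t}(P_e - P_m^{ref})$ never exceeds the actuator saturation bound $u_{max}$, so \eqref{Eqn:u_c} is admissible unsaturated for all $t$. Substituting it into the steam-valve equation of \eqref{Eqn:SM} yields a closed loop whose IntV satisfies $\dot z_c \equiv 0$, so that $z_c(t)=z_{c0}$ is conserved along every trajectory; this produces an invariant manifold on which the remaining dynamics are effectively lower-dimensional, and I would exploit this structure in the next step.

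Next I would construct a quadratic Lyapunov candidate $V(x_G)=\frac{1}{2}(x_G-x_G^*)^T P(x_G-x_G^*)$ about the nominal operating point $x_G^*$, with $P \succ 0$ obtained from a Lyapunov equation posed on the quotient space orthogonal to the conserved direction of $z_c$. Because $A_G$ is rank-one deficient with the null direction aligned with $z_c$, and because the dissipation parameters $D$, $r$, and $T_u^{-1}$ are all positive, the reduced linear dynamics on that quotient is Hurwitz, so such a $P$ exists. Differentiating $V$ along the closed-loop trajectory then splits $\dot V$ into a nonpositive quadratic dissipation term coming from the symmetric part of $P A_G$ plus a cross term driven by $(P_e-P_m^{ref})$; using the hypothesized bound on $\|P_e-P_m^{ref}\|_2$ and a Young-type inequality, I would show that $\dot V \leq 0$ on a forward-invariant sublevel set of $V$, which is precisely stability in the sense of Lyapunov.

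The step I expect to be the main obstacle is the careful handling of the rank-one deficiency of $A_G$: one must verify that the one-dimensional kernel of $A_G$ coincides exactly with the gradient direction of the conserved quantity $z_c$, so that the Lyapunov equation is well posed on the quotient space, and simultaneously that the surviving spectrum of $A_G$ inherits strict dissipativity from the positive parameters of the turbine--governor block. Once this structural alignment is in place, the remaining steps (substitution, use of the IntV conservation, and cross-term absorption via the saturation bound) follow from standard sector-bounded Lyapunov arguments, and the stability conclusion is obtained.
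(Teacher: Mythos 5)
Your first paragraph matches the intended argument: the norm condition is precisely the admissibility (non-saturation) condition for $u_{AGC,c}=\frac{r}{K_t}(P_e-P_m^{ref})$, and under it the component IntV satisfies $\dot z_c\equiv 0$. Where your proposal diverges---and where it has genuine gaps---is the Lyapunov construction. The paper does not build a positive definite $P$ on a quotient space; it takes $P=(T_2T_1)^T(T_2T_1)=\|T_2\|_2^2\,T_1^TT_1$ with $T_1=[\tfrac{D+K_t/r}{M}\;M\;T_u\;\tfrac{T_gK_t}{r}]$, i.e.\ a rank-one positive \emph{semi}definite matrix whose quadratic form is (a multiple of) $z_c^2$. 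Along closed-loop trajectories $\dot V=2\|T_2\|_2^2\,z_c\dot z_c=0$, so there is no cross term to absorb and no Hurwitz property to verify: the entire content of the lemma is the conservation of $z_c$ once the control is guaranteed unsaturated.

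Two concrete problems with your alternative route. First, the kernel of $A_G$ is the rotor-angle direction $e_1$ (a right null vector: $\delta_G$ appears in no right-hand side once $P_e$ is extracted as the nonlinear input $f_1$), whereas the conserved functional $z_c=T_1x_G$ corresponds to a left null vector $T_1^T$ with nonzero entries on all four states; these are not parallel, so the claimed coincidence of $\ker A_G$ with the gradient direction of $z_c$ is false, and the Lyapunov equation you pose on the orthogonal complement of $\nabla z_c$ does not actually quotient out the marginal mode. Second, the closed loop is not free of the driving term: substituting \eqref{Eqn:u_c} cancels $P_e-P_m^{ref}$ only in the aggregate $\dot z_c$, while $P_e$ still enters the swing equation directly, so with a strictly positive definite $P$ your $\dot V$ retains a cross term bounded only by $\tfrac{K_t}{r}u_{max}$; a Young-type absorption then gives $\dot V\le 0$ only outside a ball (ultimate boundedness), not Lyapunov stability of the equilibrium. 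Moreover, positivity of $D$, $r$, $T_u^{-1}$ alone does not make the reduced third-order governor--turbine--rotor loop Hurwitz; that requires a Routh--Hurwitz product condition on the loop gain which the lemma does not assume. The paper's (admittedly terse) semidefinite-$V$ argument sidesteps all of this at the price of proving a correspondingly weaker statement.
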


Proof of Lemma \ref{lemma1}  is organized in the appendix.  

\subsection{Area-level coordination}
The objective of this layer is to eliminate the conserved net power imbalance of each area by optimally controlling $z_r^C$.  Within each control area, in order to obtain the optimal coordinated law among participating generators, we design the following LQR problem:
\begin{eqnarray}
\label{Eqn:u_r}
	\min_{u_{AGC,r}}   &J = \int_{t_0}^{\infty}[(z_r^C)^TQ_rz_r^C  + (u_{AGC,r})^TR_ru_{AGC,r}]~d\tau \notag\\
	s.t. \quad & \dot{z}_r^C = u_{AGC,r} \quad z_r^C(t_0) = z_{r0} 
\end{eqnarray}
$R_r$ specifies the weight of control cost of each generator. In practice, these two matrices are tunable under the constraint that $Q_r$ and $R_r$ are positive definite matrices. 
\subsection{System-level coordination}
The objective of this layer is to eliminate the conserved net power imbalance of the overall system by optimally controlling $z_s^R$. The control areas are coordinated through exchanging their IntVs and controlling the IntVs that are collected.  Similarly, we apply the LQR technique to optimize the following objective function: 
\begin{eqnarray}
\label{Eqn:u_s}
\min_{u_{AGC,s}}   &J = \int_{t_0}^{\infty}[(z_s^R)^TQ_sz_s^R  + (u_{AGC,s})^TR_su_{AGC,s}] ~d\tau \notag\\
s.t. \quad & \dot{z}_s^R = u_{AGC,s} \quad z_s^R(t_0) = z_{s0}
\end{eqnarray}
$R_s$ defines the relative control cost between different control areas. In operation, $Q_s$ and $R_s$ can be tuned accordingly. 

\subsection{Main theoretical result of the  E-AGC}
In this section, we give the main theoretical result of the proposed E-AGC approach. 
\begin{mytheom}
	Given Assumption 1 and the composite control design \eqref{Eqn:u_c} - \eqref{Eqn:u_s}, the interconnected dynamical system \eqref{Eqn:system} will be stabilized and the frequency of each generator will be regulated if the following condition is satisfied: 
		\begin{equation}
		||P_e - P_m^{ref}||_2 \leq \frac{K_t}{r}u_{max}
		\end{equation}
\end{mytheom}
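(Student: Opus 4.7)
The plan is to prove the theorem by a layered argument that mirrors the layered structure of the controller, starting at the component level and working outward to area and system levels, and finishing by invoking Propositions 1 and 2 to close the loop on the network and load states.

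First, I would substitute the composite control \eqref{Eqn:uAGC} with $u_{AGC,c}$ chosen as in \eqref{Eqn:u_c} into the component IntV dynamics \eqref{Eqn:componentIntV}. The term $P_m^{ref}-P_e$ is cancelled exactly, leaving $\dot z_c=\frac{K_t}{r}(u_{AGC,r}+u_{AGC,s})$, so that the residual driver of $z_c$ is precisely the outer-layer control. Since the hypothesis $\|P_e-P_m^{ref}\|_2\le \frac{K_t}{r}u_{max}$ is exactly the saturation condition of Lemma \ref{lemma1}, the closed-loop generator block \eqref{Eqn:SM} is Lyapunov stable, giving boundedness of $x_G$ and in particular of $\delta_G$.

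Next, I would pass to the area layer. Aggregating the component IntVs by summation, as in \eqref{Eqn:areaInt}, produces the pure-integrator dynamics $\dot z_r^C=B_r\mathbf{u}_{AGC,r}$. This system is trivially controllable, so the infinite-horizon LQR problem \eqref{Eqn:u_r} admits a unique positive definite solution $P_r$ of the algebraic Riccati equation with $Q_r,R_r\succ 0$, and the resulting feedback $u_{AGC,r}=-R_r^{-1}B_r^TP_r z_r^C$ renders the closed-loop area-level IntV exponentially stable, i.e.\ $z_r^C\to 0$. The same argument applied one layer up to \eqref{Eqn:sysInt} gives $z_s^R\to 0$ under the LQR design \eqref{Eqn:u_s}, so the aggregate conserved net power imbalance at both the area and system level vanishes. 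Because $z_r^C,z_s^R$ converge to zero and their dynamics are driven only by bounded LQR feedbacks of themselves, the signals $u_{AGC,r}$ and $u_{AGC,s}$ are bounded, which in turn keeps the residual $\dot z_c$ bounded and, together with Lemma \ref{lemma1}, keeps $x_G$ bounded uniformly in time.

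Finally, I would close the argument on the network. Assumption 1 together with the boundedness of $x_G$ just established implies the terminal voltages and port voltages appearing in \eqref{Eqn:load} and \eqref{Eqn:TL} are bounded nonlinear inputs to Hurwitz linear subsystems. Propositions \ref{prop:1} and \ref{prop:2} then yield boundedness of $x_L$ and $x_{TL}$, which feeds back into $P_e=f_1(\delta_G,x_{TL},x_L)$ and therefore guarantees that the saturation hypothesis on $\|P_e-P_m^{ref}\|_2$ remains consistent along trajectories. Collecting these pieces, the full state $[x_G,x_{TL},x_L]$ is bounded, the IntVs at all three layers are regulated, and since $\omega_G-\omega^{ref}$ is tied to the component-level IntV through $P_m-P_e$ balance, the frequency deviation is driven toward zero.

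The main obstacle I anticipate is the third step: showing rigorously that the layered IntV regulation implies bounded-input, bounded-state behavior for the fully coupled nonlinear system \eqref{Eqn:system}, rather than merely for the aggregate IntVs. In particular, one must justify that the cascade from the outer LQR layers down through the component dynamics does not inject instability through the nonlinear coupling $f_1$, which requires a careful small-signal-free argument using the $\mathcal{L}_p$ stability of the Hurwitz blocks $A_L,A_{TL}$ and the explicit cancellation structure of \eqref{Eqn:u_c}. The Lyapunov-stability (rather than asymptotic) conclusion of Lemma \ref{lemma1} is what forces the proof to rely on the propositions for boundedness of the network states instead of a single monolithic Lyapunov function.
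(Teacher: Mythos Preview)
Your proposal is correct and follows essentially the same layered route as the paper's own sketch: use the LQR designs \eqref{Eqn:u_r}--\eqref{Eqn:u_s} to drive $z_r^C,z_s^R\to 0$, invoke Lemma~\ref{lemma1} under the saturation hypothesis for component-level stability, and conclude $\omega_G\to\omega^{ref}$ via the IntV definition. Your write-up is in fact considerably more detailed than the paper's, which gives only a three-line sketch; in particular, you explicitly close the loop on the network and load states via Propositions~\ref{prop:1}--\ref{prop:2}, and you honestly flag the cascade-coupling gap that the paper's sketch leaves implicit.
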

Given the page limit, we provide a sketch for the proof:
\begin{proof}
	Notice that that $z_s$ and $z_r$ can be provably regulated via LQR problems. Thus, as $t \to \infty$, $z_r \to 0$ and $z_s \to 0$.  Recall the IntV definition and Lemma 1. It can be concluded that $z_c\to 0$ which indicates $\omega \to \omega^{ref}$.  
\end{proof}

\subsection{Communication and implementation discussion}
The communication infrastructures shown in Fig.\ref{fig:5bus} enable the measurement and also the exchange of IntVs for implementing the E-AGC.
\begin{figure}[htp]
	\centering
	\includegraphics[width=0.35\textwidth]{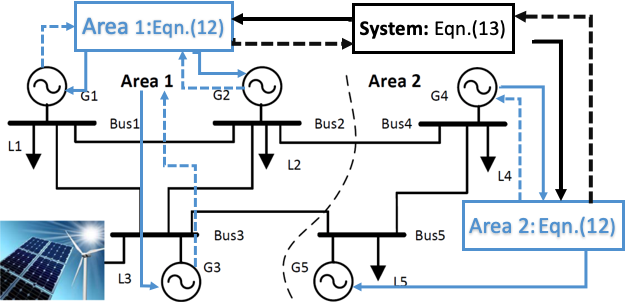}
	\caption{Information exchange of the E-AGC on a 5 bus system }
	\label{fig:5bus}
\end{figure}

 First, each generator measure its local state variables and then use \eqref{Eqn:componentIntV} to obtain the IntV. Once an IntV is locally computed, a synchronized time-stamp should be added, and then sent to its control area (doted blue line). The control area has to compute its IntV and then compute the coordinated control signals using \eqref{Eqn:areaInt} and \eqref{Eqn:u_r}.  Similarly, after receiving the IntV from control areas (doted black line), the central coordinator computes the system-level coordinated control signals using \eqref{Eqn:u_s} and then distributed back to the control areas (solid black line). Each control area further provides each generator with a control signal comprised of the control signals from different levels (solid blue line). It should be emphasized that only the IntVs are exchanged between different layers. Thus, we minimize the required information exchange, which is also safe from the cyber security perspective. 

Note that the proposed control  does not change existing governor control. Different layer signals will be added using \eqref{Eqn:uAGC} and then be applied to \eqref{Eqn:SM} as a composite control. Governor set points $\omega^{ref}$ are fixed. In addition, the proposed approach does not require predefined area power set points, unlike some hierarchical control approaches where the secondary layer needs tertiary level power set points.  Therefore, the proposed control indeed achieves both stabilization and regulation.

\section{Illustration of the E-AGC on a 5-Bus System }
\label{Sec:simulation}
In this section, simulation studies on a 5-bus (two-area) test system (Fig.\ref{fig:5bus}) are carried out. The nonlinear system \eqref{Eqn:system} including network dynamics is simulated using SEPSS at MIT\cite{SEPSS}. The purpose are twofold: to show the importance of network dynamics in frequency regulation and to illustrate the effectiveness of the proposed E-AGC. 
\subsection{System description and the test scenario}

The total capacity of the system is 25 MW with 20\% of the electric energy provided by  the RES installed at bus 3. As shown in Fig.\ref{fig:5bus}, two areas are interconnected via two transmission lines.  We assume that two control areas are strongly connected, while components are weakly connected within the area. 

In order to show the effect of network dynamics, we only consider the step changes at all loads. For this scenario, the conventional AGC is supposed to restore the frequency.  

In what follows, the simulation results of the cases with no AGC, the conventional AGC, and the proposed E-AGC are shown. As the low frequency oscillations are observed in operation and our simulations, we then provide an explanation of why they have not been captured by classic methods. 
\subsection{Simulation results and discussion}

\subsubsection{Performance with the conventional AGC}
\begin{figure}[htp!]
	\centering 
	\begin{subfigure}[b]{0.22\textwidth}
		\includegraphics[width=\textwidth]{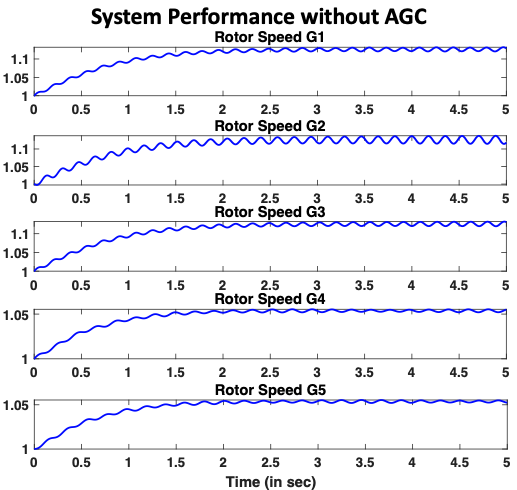}
		\caption{With primary control only}
		\label{fig:NoAGC}
	\end{subfigure}
	~
	\begin{subfigure}[b]{0.2\textwidth}
		\includegraphics[width=\textwidth]{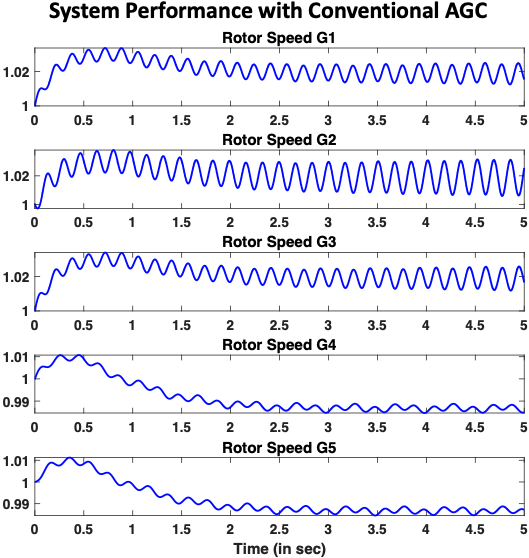}
		\caption{With conventional AGC}
		\label{fig:AGC}
	\end{subfigure}
	\caption{Frequency responses of the 5 bus system}
\end{figure}

We first disable the AGC and simulate the system with primary controllers only. Frequency responses are organized in Fig.\ref{fig:NoAGC}. It can be seen that the frequency of the generators in Area 1 settles around 1.1 $p.u.$ but has $2-5~Hz$ oscillations. Similarly, the frequency of Area 2 is oscillating around 1.05 $p.u$.

Next, we activate the conventional AGC \cite{Kundur}.  Corresponding frequency responses are shown in Fig.\ref{fig:AGC}. It can be seen that steady-state errors are greatly reduced. However, the frequency of Area 1 is higher than the nominal value, while Area 2 is slightly lower. This indicates that the inter-area oscillation exists between two areas. It is because the RES provides more power than what Area 1 needs. 

It should be also noted that the low frequency oscillations observed in Fig.\ref{fig:NoAGC} still exist in Fig.\ref{fig:AGC}.  It is worthwhile mentioning that such low frequnecy oscillations never show up in classic analysis but system operators do observe similar phenomena in operation. This is because most of conventional approaches are designed based on the quasi-static ACE and the network dynamics is ignored. However, we only assume that voltage magnitude is bounded in our model. In other words, d-q axis voltage can vary over time. As shown in the line dynamics \eqref{Eqn:TL}, the varying voltage angle may act as disturbances  to the component. Recall Proposition 1 and 2. They both explain why we observe oscillatory but bounded behavior in simulations. Therefore, it is important to consider network dynamics into frequency analysis. otherwise such unobserved oscillations are likely large enough to trigger protection devices. 

Zooming into control design, we notice that neither the primary control nor the conventional AGC has feedback with respect to rotor angle, i.e., rotor angle (voltage angle) is not directly controlled. Hence, voltage angle may interact with $x_{TL}$ and then start to oscillate. In other words, real and reactive power produced at one bus are interacting with the energy stored in the line when the oscillation occurs. Consequently, disturbances at one bus may spread out to other buses through line dynamics, which further cause oscillatory behavior in the entire system. Fig.\ref{fig:AGC} also supports the fact that there is no guarantee that output feedback can stabilize the rotor angle.  

 One argument for  ignoring network dynamics is that the network has much smaller time constant compared to generators. However, transmission line dynamics cannot change instantaneously in reality and the argument is no longer true in microgrids. The rate of real and reactive power entered from two ends of the line are not necessary to be the same. Thus, fast disturbances introduced by RESs may excite  the fast network dynamics,  resulting in accumulated effects on the slow dynamics (frequency dynamics). This will become a critical issue if more and more RESs are integrated.
 
It should be mentioned that the performance can potentially be improved if rotor angle deviation is considered in the feedback design. It is equivalent to design a PI controller. However, there are several challenges in implementing this solution. First of all, it is hard to get accurate rotor angle reference. It may not be realistic to run centralized optimization (such as AC OPF) after every change in the system. Second, the feedback gain with respect to rotor angle needs to be carefully design. The gain should be tunned so that it can tolerate large and fast-varying disturbances. Improperly tunned integrator can destabilize the system. Last but not the least, it is challenging to measure rotor angle accurately without large delay.  

\subsubsection{Performance with the proposed E-AGC}
Simulation results of the proposed E-AGC are given in Fig.\ref{fig:EAGC}. 
\begin{figure}[htp!]
	\centering
	\includegraphics[width=0.22\textwidth]{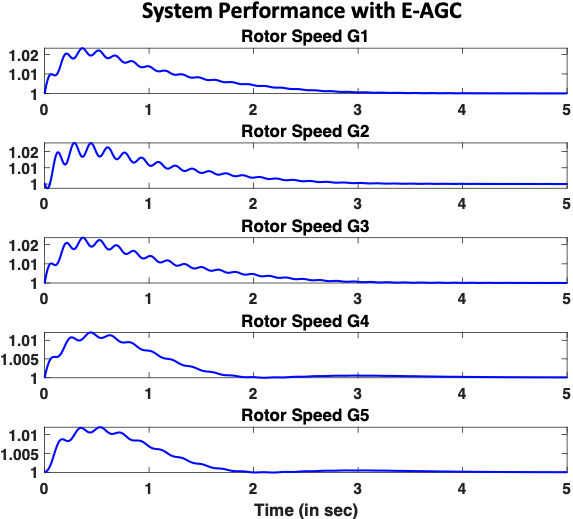}
	\caption{Frequency responses with E-AGC}
	\label{fig:EAGC}
\end{figure}

In comparison to the conventional AGC, frequency of each generators are regulated to the nominal value. More importantly, low frequency oscillations no longer exist. This indicates that the imbalances within and between control areas are limited around zero, i.e., no inter-area oscillations.  It is because the proposed E-AGC is designed based on nonlinear dynamic systems. Network dynamics is preserved in the dynamics of the IntV at different levels. At regulation stage, instead of requiring hard-to-get angle reference, area-level and system-level control are using the IntV as feedback signals.  These two layers not only coordinate the resources, but also act as an integrator, which eventually eliminates the low frequency oscillations observed in Fig.\ref{fig:AGC}. 

From economic point of view, the proposed E-AGC significantly reduce the systematic regulation cost via area-level and system-level coordination, due to the proposed  formulation. Considering that the E-AGC requires much less information exchange, we argue that this proposed control scheme could be very cost-effective. In addition, we believe that the IntV information is one potential communication protocol for the future grid operation, grid control, etc. 

\section{Conclusions}
\label{Sec:Conclusion}
In this paper we revisit  the frequency regulation problem for future electric energy systems. We summarize  the emerging practical problems of applying the conventional AGC, especially when network dynamics and highly variable RESs are presented in the system. The E-AGC approach is thus introduced as an alternative solution. 
The regulation cost can be systematically reduced by using little information exchange. Simulations show that the E-AGC outperforms the conventional approaches. 
Simulations for large-scale systems will be given in our future publications.   

%

\appendix
\subsection{Proof Sketch for Lemma 1}
\begin{proof}
 $T_1 = 
[\frac{D+Kt/r}{M}~M~T_u~ \frac{T_gK_t}{r}
]$, $T_2 = [
0~M~T_u~\frac{T_gK_t}{r}]^T$ and $P = (T_{2}T_{1})^T(T_2T_1)$. It is easy to check that $P\in \mathbb{R}_{+}^{4\times4}$. Thus, if we choose a Lyapunov function $V = x_G^TPx_G$, 
the rest is straightforward to show using the procedures in  \cite{Khaili}. 
\end{proof}


\begin{thebibliography}{99}
\bibitem{blackout}
D. Davidson, et al.,"Long term dynamic response of power systems:An analysis of major disturbances," IEEE Trans.Power App. and Syst.,1975.
\bibitem{AGC1}
N. Cohn, Control of Generation and Power Flow on Interconnected Systems. New York: Wiley, 1961.
\bibitem{AGC2}
Y. G. Rebours, D. S. Kirschen, et al., "A survey of frequency and voltage control ancillary services" IEEE Trans. on Power Systems, 2007.
\bibitem{ADI}
A. Oneal, “A simple method for improving control area performance: area control error (ace) diversity interchange adi,” Power Systems, IEEE Transactions on, vol. 10, no. 2, pp. 1071–1076, 1995.
\bibitem{LQR}
C. E. Fosha and O. I. Elgerd, “The megawatt-frequency control problem: A new approach via optimal control theory,” IEEE Trans. Power App. and Syst., 1970.
\bibitem{Qixing2012}
Q. Liu and M. Ilic. "Enhanced automatic generation control (E-AGC) for future electric energy systems." PES General Meeting, 2012 IEEE.

\bibitem{QixingThesis}
Q. Liu, “A Large-Scale Systems Framework for Coordinated Frequency Control of Electric Power Systems”, Ph.D. dissertation, CMU, 2013.


\bibitem{IFAC}
M. Ili\'{c}, R. Jaddivada, and X. Miao. "Modeling and analysis methods for assessing stability of microgrids." IFAC-PapersOnLine 2017
\bibitem{Khaili}
Khalil, Hassan K. "Noninear systems." Prentice-Hall, New Jersey (1996)

\bibitem{IntV2}
X. Liu, “Structural Modeling and Hierarchical Control of Large-Scale Electric Power Systems,” Ph.D. dissertation, MIT, April 1994.
\bibitem{SEPSS}
M. Ilic, R. Jaddivada, X. Miao, "Scalable Electric Power System Simulator", ISGT-Europe, October 2018
\bibitem{Kundur}
Kundur, Prabha, Neal J. Balu, and Mark G. Lauby. Power system stability and control. Vol. 7. New York: McGraw-hill, 1994.

\end{thebibliography}
\end{document}